%
\documentclass[referee,natbib,runningheads]{svjour3}
\journalname{arXiv.org}
\smartqed  
\usepackage{graphicx}

\usepackage{balance}
\usepackage{xcolor}
\usepackage{booktabs}
\usepackage{amsmath}
\usepackage{graphicx}
\usepackage{hyphenat}
\usepackage[margin=3cm]{geometry}


\begin{document}
\title{On the maximization of likelihoods belonging to the exponential family using ideas related to the Levenberg-Marquardt approach}


\titlerunning{computing mle with a Levenberg-Marquardt-like approach}        

\author{Marco Giordan \and Federico Vaggi \and Ron Wehrens}


\institute{M. Giordan \at
              IASMA Research and Innovation Centre, Biostatistics and Data Managment\\
              \email{marco.giordan@fmach.it}           
           \and
           F. Vaggi \at
              IASMA Research and Innovation Centre, Integrative Genomic Research \\
              \email{federico.vaggi@fmach.it}                 
              \and 
              Ron Wehrens \at
              IASMA Research and Innovation Centre, Biostatistics and Data Managment \\
              \email{ron.wehrens@fmach.it}      
}

\date{Received: date / Revised: date}

\maketitle

\begin{abstract} 
The Levenberg-Marquardt algorithm is a flexible iterative procedure used to solve non-linear least squares problems. In this work we study how a class of possible adaptations of this procedure can be used to solve maximum likelihood problems when the underlying distributions are in the exponential family. We formally demonstrate a local convergence property and we discuss a possible implementation of the penalization involved in this class of algorithms. Applications to real and simulated compositional data show the stability and efficiency of this approach.

\keywords{Aitchison distribution \and compositional data \and Dirichlet distribution \and generalized linear models \and natural link \and optimization.}
\end{abstract}

\section{Introduction}

The exponential family is a large class of probability distributions widely used in statistics. Distributions in this class possess many useful properties that make them useful for inferential and algorithmic purposes, especially with reference to maximum likelihood estimation. Despite this, when closed form solutions are not available, computational problems can arise in the fitting of a particular distribution in this family to real data. Starting values far from the optimum and/or bounded parameter spaces may cause common optimization algorithms like Newton-Raphson to fail. Convergence along a canyon and the phenomenon of \textit{parameter evaporation} (i.e. the algorithm is lost in a plateau and pushes the parameters to infinity) can be problematic too \citep[see][]{Transtum2012}. In such cases algorithms with more reliable convergence must be employed to find the maximum likelihood estimates.

The Levenberg-Marquardt algorithm was developed for the minimization of functions expressed as the sum of squared functions, usually squared errors, where its convergence properties have been theoretically demonstrated. Its performance is usually very good for functions that are mildly non-linear, and therefore many authors have attempted to adapt it to the maximization of the likelihood. In literature such adaptations have been based upon good heuristics and have been shown to provide reliable results, but to the best of our knowledge formal arguments for their convergence are still lacking. \citet{Smyth2002} gave an algorithm for REML (restricted or residual maximum likelihood) scoring with a Levenberg-Marquardt restricted step. It was applied to find estimates in heteroscedastic linear models with data normally distributed, and, taking advantage of the strong global convergence properties of the Levenberg-Marquardt algorithm, the author concluded that the result \textit{\ldots can therefore be expected to be globally convergent to a solution of the REML equations subject to fairly standard regularity conditions \ldots}. \citet{statmodSmyth} considered another adaptation of the Levenberg-Marquardt algorithm to fit a generalized linear model with secure convergence (for gamma generalized linear model with identity links or negative binomial generalized linear model with log-links, see the R package \textit{statmod}).  \citet{aitchison2003statistical} used a Levenberg-Marquardt step for the maximization of the likelihood of two distributions on the simplex that belong to the exponential family. \citet{Stinis2005} used the standard Levenberg-Marquardt algorithm to minimize an error function related to a moment-matching problem arising in maximum likelihood estimation of distributions in the exponential family.

In this paper we evaluate a class of adaptations of the Levenberg-Marquardt algorithm to find the maximum likelihood estimates in the exponential family and we give formal proof of convergence for such an algorithm in the general case of generalized linear models with natural link. We examine its performance on real and simulated data.

\section{Exponential Family}

We consider the problem of maximum likelihood estimation in the exponential family. With $y$ we denote a $K$-dimensional observation. The densities of the distributions in the exponential family can be written in their natural parametrization as
\begin{equation}\label{EF}
f(y|\theta) = \exp \left\{ y' \theta - b(\theta) + c(y) \right\}
\end{equation}
where the $K$-dimensional vector $\theta$ is the natural parameter, $b(\theta)$ is the log-partition function and $\exp\{c(y)\}$ is the base measure. The natural parameter space is the set of natural parameters for which the log-partition function is finite. The covariance matrix of a random vector with distribution in the exponential family is usually supposed to be positive definite in the interior of the natural parameter space.

The Newton-Raphson algorithm and the Fisher scoring algorithm are two algorithms that are commonly used to find maximum likelihood estimates. They are equivalent in the multivariate exponential family with natural parametrization because for a sample of $n$ i.i.d. observations the Hessian matrix of the loglikelihood is $n$ times the Hessian of the log-partition function and is independent of the data. Further, for densities in the exponential family the Hessian is related to the covariance matrix and it is invertible, insuring good convergence properties for the algorithms. In what follows we denote with $l$ the loglikelihood function, with $s$ the corresponding score function (the gradient of the loglikelihood function with respect to the parameters) and with $H_l$ the Hessian of the loglikelihood. With this notation a Newton-Raphson iteration can be expressed as
\begin{equation}\label{NRiteration}
\theta^{(t+1)} = \theta^{(t)} - [{H_l(\theta^{(t)})}]^{-1} s(\theta^{(t)}).
\end{equation}
The algorithm can also be written in the equivalent form of IRLS, Iteratively Reweighted Least Squares  (see appendix A). In appendix A we provide all the necessary notation and results to extend the Newton-Raphson algorithm to the case of generalized linear model with natural link. They will be the basis for the proof in appendix B.

\section{A class of algorithms for maximum likelihood estimation}\label{sec:LM}

The Levenberg-Marquardt algorithm \citep{Levenberg1944,Marquardt1963} is an adaptive algorithm that is used to solve least squares problems. It is based on a modification of the Gauss-Newton method. Specifically, the Levenberg-Marquardt algorithm uses in its iterations a penalized version of $J'J$ where $J$ is the Jacobian matrix of the target function, which guarantees that the matrix can be inverted. The penalization is opportunely tuned through a damping parameter. In practice the Hessian of the function to be minimized is replaced by
\begin{equation}\label{dampLM}
J'J + \gamma \mathrm{diag}(J'J)
\end{equation}
where $\gamma$ is the damping parameter. Here we show how similar ideas can be applied directly to the maximization of the loglikelihood in exponential families. In appendix B we give a formal proof of a local convergence property for the algorithm outlined below when this is applied to generalized linear model with natural link. This case includes the case of distributions in the exponential family as can easily be seen considering design matrices equal to identity matrices (see appendix A).

The algorithm for maximum likelihood estimation in the exponential family that we evaluate in this work can be thought as a penalized version of the Newton-Raphson algorithm, with a penalization similar to the one used in the Levenberg-Marquardt algorithm, \textit{i.e.}, Equation (\ref{dampLM}). The iterations are given by
\begin{equation}\label{LMiteration}
\theta^{(t+1)} = \theta^{(t)} - [{H_l(\theta^{(t)})} + \gamma^{(t)}{P(\theta^{(t)})}]^{-1} s(\theta^{(t)})
\end{equation}
where $\gamma^{(t)}$ is a positive damping parameter and $P(\theta^{(t)})$ is a symmetric negative definite matrix. In practice $P(\theta^{(t)})$ is always a diagonal matrix to be used as penalization. The proposal of \citet{Levenberg1944} for the least squares problem was to use as penalization the identity matrix $I$. For maximum likelihood estimation the corresponding proposal is $P(\theta^{(t)})=-I$. The proposal of \citet{Marquardt1963} instead corresponds to $P(\theta^{(t)})=\mathrm{diag} H_l{(\theta^{(t)})}$. In this paper we use the latter form which has the advantage of better following the curvature of the function being maximized. The damping parameter will play a central role to make the algorithm adaptive. This can decrease the penalty, and thus speed up the convergence. Moreover, for a bounded parameters space, a careful tuning of the damping parameter is required to avoid large steps that could bring the parameters outside the allowed region. The penalization $\gamma^{(t)}{P(\theta^{(t)})}$ can be used to ensure that $[{H_l(\theta^{(t)})} + \gamma^{(t)}{P(\theta^{(t)})}]$ is invertible. In regular exponential families the inversion of $H_l$ is usually possible, but such a matrix can still be poorly conditioned and therefore algorithms for matrix inversion can fail or perform poorly. The penalization can avoid this problem. Other features related to the implementation of the algorithm will be discussed in the next section.

The iterations in Equation (\ref{LMiteration}) are a sound basis for a stable algorithm. In appendix B we give a formal proof for the convergence of the algorithm.

\subsection{Damping Parameter and Stopping Criteria}\label{secDampPar}

The damping parameter $\gamma$ in Equation (\ref{LMiteration}) influences the step size of the iterations. As $\gamma$ reaches zero, the algorithm reduces to the Newton-Raphson algorithm. Since the Newton-Raphson algorithm has a quadratic rate of convergence in the neighbourhood of the maximum we want a small $\gamma$ in such a situation. However, if the current iteration is far from the maximum, small steps can avoid some of the problems discussed in the previous section. These steps are provided by a large value of $\gamma$. To achieve the desired changes in the damping parameter we adopt the strategy proposed by \cite{Nielsen1998}. We use the following gain function:
\begin{displaymath}
\varrho^{(t+1)} = \frac{l(\theta^{(t+1)})-l(\theta^{(t)})}{ -0.5(\theta^{(t+1)}-\theta^{(t)})'H_l(\theta^{(t)})(\theta^{(t+1)}-\theta^{(t)})}.
\end{displaymath}
With this function we are comparing the actual increase or decrease of the loglikelihood function with the second order term of its Taylor approximation. The denominator is always positive (see appendix A) and therefore a positive value of the gain function indicates that we are moving in the right direction. With a large positive value of the gain function we can reduce the parameter $\gamma^{(t+1)}$. In this way we approximate the Newton-Raphson algorithm. A small value of $\varrho^{(t+1)}$ indicates instead  that the Taylor's approximation is not working very well and in this case it is better to penalize the steps by increasing $\gamma^{(t+1)}$. 
The Hessian matrix in the denominator of the gain function can be replaced by its actual approximation in the squared brackets of Equation (\ref{LMiteration}). We do this when small values of the Hessian lead to values close to zero in the denominator of the gain function.
The damping parameter can be updated as follows: 
\begin{equation}\label{DampPar}
 \nonumber \gamma^{(t+1)} = \gamma^{(t)} \max \left\{\frac{1}{3},1 - (2 \varrho^{(t+1)} - 1)^3 \right\}I_{\varrho^{(t+1)} > 0} + \gamma^{(t)} 2  I_{\varrho^{(t+1)} \leq 0} 
\end{equation}
where $\gamma^{(0)}=1$. This value for $\gamma^{(0)}$ proposes therefore an initial penalized step. Other choices of course are possible.  The above function is similar to the one suggested by \citet{IMM2004}, it is positive and is continuous in $\varrho^{(t+1)}$. 

We stop the algorithm as soon as one of three criteria is reached: if the norm of the score is very close to zero: $ \parallel s(\theta^{(t)}) \parallel < \epsilon_1$; if the changes in the parameters in successive iterations are very small: $ \parallel \theta^{(t+1)} - \theta^{(t)} \parallel < \epsilon_2 ( \parallel \theta^{(t)} \parallel + \epsilon_2) $; if the number of iterations is greater than a pre-established threshold, \textrm{maxit}. In this work we consider $\epsilon_1=\epsilon_2=10^{-8}$ and \textrm{maxit}=1000. The algorithm is considered to have reached convergence only if the final estimates are inside the natural parameter space.

\section{Examples}

In this section we apply the algorithm described above to two distributions belonging to the exponential family. Specifically, we focus on two distributions for the analysis of compositional data, \textsl{i.e.}, positive data that sum up to one. 

Within numerical accuracy, all the algorithms we compared reach the same optimum upon convergence (the distributions in the natural exponential family are convex), therefore we focus our comparisons on the computational efficiency and stability.

\subsection{First case: Dirichlet distribution}

If we denote with $\alpha = (\alpha_1,\ldots,\alpha_K)$ the vector of parameters of the Dirichlet distribution then its loglikelihood for $n$ i.i.d. observations can be written as
\begin{equation}\label{lDirichlet}
l(\alpha)=n \ln \Gamma(\sum_{k=1}^K \alpha_k) - n\sum_{k=1}^K \ln \Gamma(\alpha_k) +  \sum_{k=1}^K (\alpha_k - 1)\sum_{r=1}^n \ln y_{rk}.
\end{equation}
With this notation the parameters must be greater than zero. The transformation $\theta_k = \alpha_k - 1$ gives the natural parameters. The Dirichlet distribution can arise as a transformation of independently distributed gamma variables and as a consequence has independence properties that can bound its use \citep[see][p. 59-60]{aitchison2003statistical}. For high dimensional data however it is reasonable to assume that many components are almost independent and therefore we study its performance in this situation.

We compare the algorithm in (\ref{LMiteration})  (henceforth, simply LM) with the Newton-Raphson (NR) algorithm and a fixed point iteration (FPI) algorithm. The former has quadratic convergence in a neighbourhood of the maximum but can often fail, whereas the latter is very stable but can be slow. We also consider an implementation of (\ref{LMiteration}) with a fixed damping parameter \citep[see][]{giordan2014} to evaluate the differences with the suggested adapting damping parameter.  As starting values for the algorithms we employ four different strategies: the method of moments and the proposals based upon the works of \citet{Dishon1980}, \citet{Ronning1989} and \citet{Wicker2008}. The method of moments has the advantage of simplicity but works only on the marginal distributions and can give estimates outside the natural parameter space. The method of \citet{Dishon1980} is an improvement that considers information from all the data for the estimation of each single parameter. The proposal of \citet{Ronning1989} always gives initial parameters inside the parameter space, whereas \citet{Wicker2008} developed an approximation of the likelihood that is useful for high-dimensional data.

\subsubsection{Simulations}

In this simulation study we generate data from a Dirichlet distribution with dimension 1000. The number of simulated samples is equal to 20.  To avoid the rounding of many randomly generated values to zero due to machine precision we simulate data from distributions with large parameters. Specifically we consider an hypothetic sum of the parameters, $\sum \alpha_k$, ranging from 10000 to 50000 with step size of 2000 and we draw each parameter from a uniform distribution between $\sum \alpha_k/K -2$ and $\sum \alpha_k/K +2$ with $K=1000$. 

In \figurename ~\ref{fig:Sim} the convergence rate for each combination of starting values and methods (upper panel) and the mean number of iterations required for convergence (lower panel) are shown. NR has convergence problems when the starting values are given by the methods of moments. The other two methods instead are very stable. They provide convergence with all the starting values strategies. The increasing number of iterations needed for FPI (lower panel) suggest that by raising \textrm{maxit} we can always have  convergence for this method. NR, as expected, shows a very fast convergence. FPI requires a large number of iterations for convergence. The LM implementation with the fixed damping parameter has a performance very close to that of FPI. The adapting damping parameter instead brings the required number of iterations for convergence very close to the number of iterations that NR is using, while having greater stability than the FPI algorithm.

\begin{figure*}[!p]
\centering
\includegraphics[scale=0.50,angle=270]{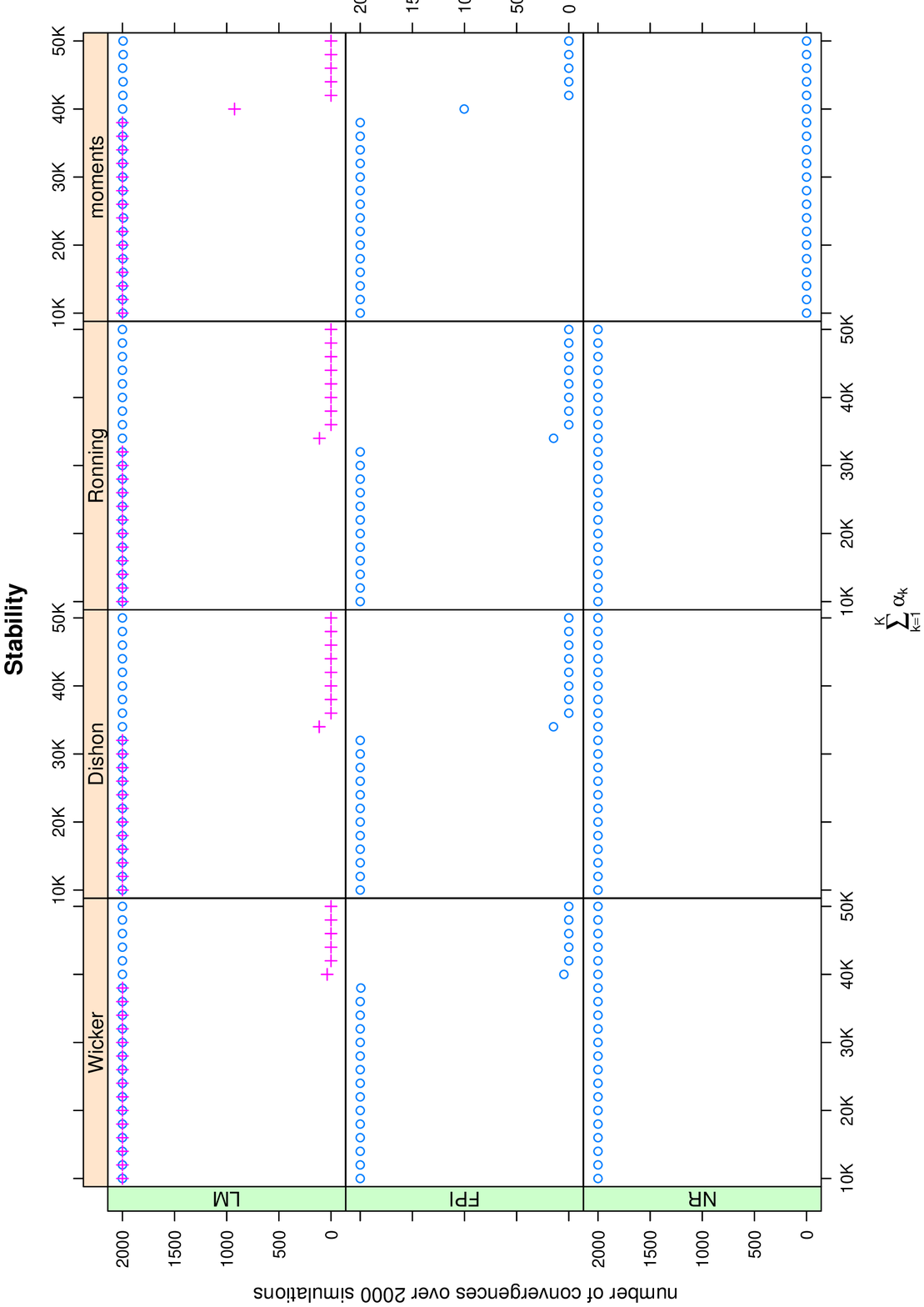}
\includegraphics[scale=0.50,angle=270]{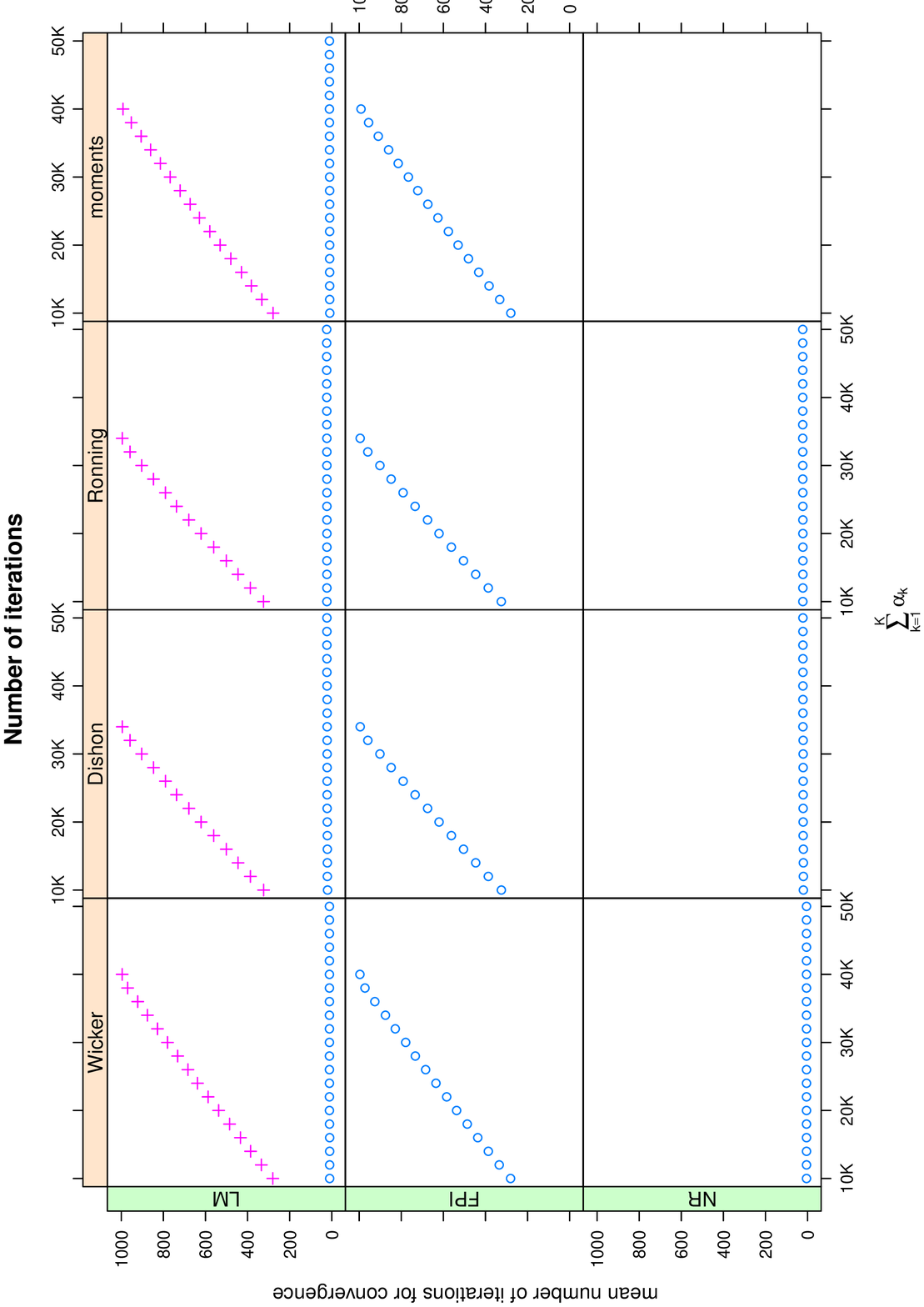}
\caption{simulations from the Dirichlet distribution. We compare three algorithms (LM, FPI and NR) and four different strategies for the starting values (Wicker, Dishon, Ronning and moments). With plus we report the results of LM using a fixed damping parameter.}\label{fig:Sim}
\end{figure*}

\subsubsection{Apple data set}

We now want to compare the performance of the algorithms on real data. We do this only for the two best algorithms from the previous simulation study: LM with the adapting damping parameter and NR. The apple data set \citep{Franceschi2012} contains mass spectrometric measurements and it is publicly available in the R package BioMark \citep{Wehrens2012}. We consider a subset of twenty samples (10 controls and 10 spiked-in samples) of positive ionization data. After an appropriate normalization to get compositional data the final data set has 1602 variables. The convergence results are reported in \tablename ~\ref{tab:Apple}.

The LM algorithm always reaches convergence for all the starting values strategies whereas NR fails to converge for two out of four initialization strategies. The number iterations required for reaching convergence is very similar to the number required by the NR algorithm when it does reach convergence. Therefore the LM variant is both stable and fast.

\begin{table*}
\centering
\caption{results for the apple data set. LM converges for all initailization methods, whereas NR fails to converge in two out of four cases.}\label{tab:Apple}
\begin{tabular}{lcccc}
\toprule 
 & convergence NR & convergence LM & n iterations NR & n iterations LM \\ 
\midrule 
Wicker & no & yes & - & 11 \\ 
Dishon & yes & yes & 21 & 22 \\ 
Ronning & yes & yes & 30 & 31 \\ 
moments & no & yes & - & 55 \\ 
\bottomrule
\end{tabular} 

\end{table*}

\subsection{Second Case: Aitchison distribution}

For low dimensional data the independence structure of the Dirichlet distribution can be too strong to allow a good fitting. \citet[][p. 310-315]{aitchison2003statistical} introduced a more flexible alternative that arises as a generalization of the Dirichlet distribution and the additive logistic normal distribution. We refer to it as the \textit{Aitchison distribution} in this paper. Its parametric form is again in the exponential family and the corresponding loglikelihood can be expressed as:
\begin{equation}\label{lAitchison}
l= -n \log c(\alpha,\beta) + \sum_{i=1}^{K}(\alpha_i - 1)U_i + \sum_{i=1}^{K-1}\sum_{j=i+1}^{K} \beta_{ij}V_{ij}
\end{equation}
where
$$
U_i=\sum_{r=1}^n \log y_{ri}  
$$
and
$$
V_{ij}= -\frac{1}{2}\sum_{r=1}^n (\log y_{ri} - \log y_{rj})^2  
$$
with $i=1,\ldots,K-1; j=i+1,\ldots,K$. We remark that the Aitchison distribution has a more complex structure than the additive logistic normal distribution. Contrary to the additive logistic normal and the normal on simplex where a change in coordinate allows to work with standard techniques for the multivariate normal distribution \citep[see, for example,][]{Figueras2008}, the Aitchison distribution requires computational tools to find the maximum likelihood estimates.

It is evident from equations (\ref{lDirichlet}) and (\ref{lAitchison}) that the price to pay for the generalization is the increased number of parameters. This can be quite high, even for a data set of modest dimension. For a compositional data set with $K$ variables the number of parameters to be estimated is:
$$
K + \frac{1}{2}\left( K(K-1) \right)= \frac{1}{2}\left( K(K+1) \right).
$$
Moreover, the normalizing factor $c(\alpha,\beta)$ has no closed form and therefore it and its derivatives must be evaluated numerically. The computation of the loglikelihood is therefore particularly demanding even for a low dimensional data set. For example, for a compositional data set with 5 variables the Dirichlet distribution requires only 5 parameters while the Aitchison distribution requires 15.  Further, the lack of a closed form for the Hessian matrix implies the numerical evaluation of 120 integrals in each iteration. 

For the Dirichlet distribution the natural parameter space is the set of vectors with positive elements $\{ \alpha =(\alpha_1,\ldots,\alpha_K) \ | \ \alpha_i > 0 \ \mathrm{for} \ i=1,\ldots,K \}$. For the Aitchison distribution a similar description of the natural parameter space is not available. In \citet[][p. 311-312]{aitchison2003statistical} two different restrictions are proposed to obtain proper density functions. However, these conditions are sufficient but not necessary. In practice the normalizing constant of a proper density is finite and for a current vector of parameters this must be numerically evaluated in the algorithms. Therefore if the algorithms converge to finite values these must be in the natural parameter space because the corresponding normalizing constant (log-partition function) must be finite. To accurately calculate the normalizing factor $c(\alpha,\beta)$ we use Gauss-Hermite integration following the suggestions in \citet[][p 314-315]{aitchison2003statistical}.

The closed-form solutions for the maximum likelihood estimates of the additive logistic normal distribution will be used as starting values for the algorithms. These estimates are simply the sample mean and variance of the additive log-ratio transformation of the original compositional data \citep[see][p. 113, 313-314]{aitchison2003statistical}.

\subsubsection{Simulations}

For the Aitchinson distribution, the calculation of the gradient requires the calculation of several complex numerical integrals, and we therefore only examined a case with a low number of variables.
To generate samples from the Aitchison distribution we used the R package \textit{compositions} version 1.30-1 \citep{van2013analyzing}. To ensure good starting values we used 2000 simulations where the sample covariance matrix from the log-ratio transformation was positive definite. The number of samples in each simulated data set was 20. The dimensions of the compositional data sets were 3 and 5, corresponding to parameter vectors ($\alpha,\beta$) of length 6 and 15 respectively. The parametrization used in the above package is slightly different from the one used in the paper; for the simulations we considered parameters according to the example of \citet[][p. 66]{van2013analyzing}. 

A summary of the simulations is given in \tablename ~\ref{AitSim}. The number of succesful convergences dramatically increases when we use LM instead of NR, and the increase in the number of iterations required (in the subset of cases when NR does converge) is very modest, making it an obvious improvement for maximum likelihood estimation.

\begin{table*}
\centering
\caption{simulations from the Aitchison distribution. The first column indicates the number of parameters to be estimated. In columns two and three the number of convergences over 2000 simulations are given for the NR algorithm and the LM algorithm, respectively. Similarly, the last two columns show the mean numbers of iterations in case of convergence.}\label{AitSim}
\begin{tabular}{ccccc}
\toprule
n parameters & n convergences NR & n convergences LM & n iterations NR & n iterations LM \\ 
\midrule
6 & 22 & 879 & 7.95 & 13.32 \\ 
15 & 12 & 745 & 8.75 & 12.90 \\ 
\bottomrule
\end{tabular} 
\end{table*}

\subsubsection{Applications to known data sets}

We now apply the NR and LM algorithms to four data sets and we compare their performance.  The data sets are publicly available in the R package \textit{robCompositions} \citep{Templ2011} and/or the package \textit{compositions}. They are briefly described below (more information is available inside the actual packages):
\begin{itemize}
\item[] Data set 1, skye lavas. It is a data set with 3 variables: magnesium, sodium-potassium and iron. We used the variable \textit{iron} for the log-ratio transform.
\item[] Data set 2, arctic lake. It is a data set with 3 variables: clay, silt and sand. We used the variable \textit{sand} for the log-ratio transform.
\item[] Data set 3, machine operators. It is a data set with 4 variables: high quality production, low quality production, setting and repair. We used the variable \textit{repair} for the log-ratio transform.
\item[] Data set 4, expenditures. It is a data set with 5 variables: housing, food stuffs, alcohol, services and other. We used the variable \textit{other} for the log-ratio transform.
\end{itemize}
In \tablename ~\ref{AitDataSets} we summarize the convergence results for the Aitchison distribution. Only LM is able to give convergence for the arctic lake data set. Both algorithms failed to converge for the expenditures data set. For the remaining data sets both algorithms converge to the same parameters although the number of required iterations by LM is slightly greater than those required by NR.

\begin{table*}
\centering
\caption{analysis of 4 data sets with the Aitchison distribution. The first column indicates the data sets. In the columns two and three we see if the NR algorithm and the LM algorithm have converged. In the last two columns we give the number of iterations used for the  convergence.}\label{AitDataSets}
\begin{tabular}{lcccc}
\toprule
 & convergence NR & convergence LM & n iterations NR & n iterations LM \\ 
\midrule
skye lavas & yes & yes & 5 & 14 \\ 
arctic lake & no & yes & - & 12 \\ 
machine operators & yes & yes & 5 & 14 \\ 
expenditures & no & no & - & - \\ 
\bottomrule
\end{tabular} 
\end{table*}

\section{Discussion  and conclusions}

In this paper we have investigated the use of the Levenberg-Marquardt algorithm to find the maximum likelihood estimates of distributions in the exponential families. We have given formal proof of convergence to the optimum for a class of possible adaptations and we have shown through real and simulated data that the LM-variant outperforms other algorithms in many settings.

The penalization used in the paper is related to the curvature of the loglikelihood and it ensures a well-conditioned negative definite matrix in each iteration of the algorithm. This provide a stable algorithm at the price of an increased number of iterations for the convergence (using as reference the Newton-Raphson algorithm). However, the damping parameter used in the penalization is adaptive and therefore it can speed up the convergence of the algorithm.  We have shown in a simulation study that the difference with a fixed damping parameter can be substantial. We have compared the efficiency of the algorithms taking into account the number of iterations rather than the time. Since Equation (\ref{LMiteration}) is essentially Equation (\ref{NRiteration}) with the added computation for the penalty and the Fixed Point Iteration algorithm requires a much greater computational effort than both other algorithms, we can guarantee that the order of the efficiency comparison is in any case preserved. 

In this work we have focused our attention on distributions for the analysis of compositional data. The computational performance improvements, however, are expected to hold also for other distributions in the exponential family because the algorithm is not related to compositional data. In particular we have used the Dirichlet distribution and the Aitchison distribution to analyze high and low dimensional data respectively. For both distributions the adaptation of the Levenberg-Marquardt algorithm has shown substantial stability advantages over other algorithms. Despite this, the number of required iterations is very low thanks to the adaptive damping parameter. The algorithm studied in the paper is therefore a powerful computational tool for maximum likelihood estimation in the exponential family.

\bibliography{Biblio.bib}
\bibliographystyle{plainnat}

\section*{Appendix A}

In this appendix we summarize the multivariate generalized linear models using notation and results of \citet{Fahrmeir2001} and we refer the reader to their book for the regularity conditions usually assumed. The $K$-dimensional random variable $Y$ is assumed to have a distribution belonging to an exponential family, i.e. to be distributed according to a discrete or continuous density with the form
\begin{equation}\label{EFdensity}
f(y|\theta, \lambda) = \exp \left\{ \frac{y' \theta - b(\theta)}{\lambda} + c(y,\lambda) \right\}
\end{equation}
where $\theta$ is a $K$-dimensional vector of parameters and $\lambda > 0$ is the dispersion (nuisance) parameter. If we denote with $\mu (\theta)$ and $\Sigma (\theta)$ the expected value $E_{\theta} Y$ and covariance matrix $\mathrm{cov}_{\theta} Y$, respectively, it is well know that 
\begin{eqnarray}
\mu (\theta) &=& \frac{\partial b(\theta)}{\partial \theta},\label{mean} \\
\Sigma (\theta) &=& \lambda \frac{\partial^2  b(\theta)}{\partial \theta \partial \theta'}.\label{covariance}
\end{eqnarray}

In the multivariate generalized linear model the parameters $\theta$ and $\lambda$ are not constrained to be constant for different observations. For the $i$-th observation $y_i$, $i=1,\ldots,n$, a design $K \times p$ matrix $Z_i$ and the $p$-dimensional vector $\beta$ of parameter of interest provide the linear predictor $\eta_i = Z_i \beta$. The expected value $\mu_i=\mu(\theta_i)$ is related to the linear predictor through the response function $h$, $\mu_i = h(\eta_i)$. When the inverse of $h$ exists it is denoted by $g$ and it is called the link function, $g(\mu_i)=\eta_i$. The vector $\theta_i$ can now be expressed as $\theta_i = u(Z_i \beta)=\mu^{-1} h(Z_i \beta)$ and in what follows we will highlight the dependency of the functions on $\beta$ more than on $\theta_i$. 

The loglikelihood for $n$ independent observations, up to a constant, can be written as
\begin{equation}\label{LogLik}
l(\beta) = \sum_{i=1}^n l_i(\beta_i)=\sum_{i=1}^n \frac{y_i' \theta_i - b(\theta_i)}{\lambda_i}
\end{equation}
where $l_i$ indicates the part related to the $i$-th observation. The parameter $\lambda_i$ are very often expressed as $\phi/\omega_i$ where $\phi$ is a constant parameter and $\omega_i$ are known weights. Here also $\phi$ is supposed to be know. The score function $s(\beta)$ can be expressed as the sum of the individual score functions $s_i(\beta) = \partial l_i(\beta) / \partial \beta$ 
\begin{equation}
s_i(\beta) = Z_i' D_i(\beta)\Sigma_i^{-1} (\beta) (y_i - \mu_i(\beta)),
\end{equation}
$D_i(\beta)$ indicating the derivative of $h(\eta)$ evaluated at $\eta_i = Z_i \beta$. It will be convenient to consider also the following form of the individual score functions
\begin{equation}
s_i(\beta) = Z_i' W_i(\beta)D_i^{-1}(\beta) (y_i - \mu_i(\beta))
\end{equation}
with
\begin{displaymath}
W_i(\beta)= D_i(\beta)\Sigma_i^{-1}(\beta)D_i(\beta)'.
\end{displaymath}
If we denote with $H_l(\beta)$ the Hessian matrix of the loglikelihood we can write it as the sum of the individual component $H_{l_i}(\beta)$. Analogously the Fisher information, $F(\beta)=- E_{\beta}H_l(\beta)$ can be expressed as the sum of the individual components $F_i(\beta)$
\begin{equation}
F_i(\beta)=E_\beta{s_i(\beta) s_i(\beta)'} = Z_i' W_i(\beta) Z_i.
\end{equation}
Finally, the individual observed information is related to the expected information by the following relationship: 
\begin{equation}\label{Rel_ExpInf_ObsInf}
F_{i,obs}(\beta)=F_i(\beta) - R_i(\beta)
\end{equation}
where
\begin{equation}\label{partOfObsInf}
R_i(\beta) = \sum_{r=1}^q Z_i' U_{ir}(\beta) Z_i(y_{ir}-\mu_{ir}(\beta)).
\end{equation}
In equation (\ref{partOfObsInf}) $U_{ir}(\beta)=\partial^2 u_r(\eta_i)/\partial \eta \partial \eta'$, where $u_r(\eta_i)$ is the $r$-th component of $u(\eta_i)$ evaluated at $\eta_i = Z_i \beta$.

\subsection*{Numerical methods}

Numerical methods to get maximum likelihood estimates can be compactly presented by using matrix notation. Here we give the details for Newton-Raphson, Fisher scoring and the algorithm outlined in the paper. Let us denote with $Z$ the total design matrix
\begin{displaymath}
Z=\left[ \begin{array}{c}
Z_1 \\
\vdots \\
Z_n
\end{array} \right]
\end{displaymath}
and let us denote with $y'= (y_1',\ldots,y_n')$ and $\mu(\beta)'=(\mu_1(\beta)',\ldots,\mu_n(\beta)')$ the total vector of observations and the respective vector of expected values. Now consider that the following matrices have a block diagonal form
\begin{displaymath}
\Sigma(\beta)=\mathrm{diag}(\Sigma_i(\beta)), \quad W(\beta)=\mathrm{diag}(W_i(\beta)),
\end{displaymath}
\begin{displaymath}
 D(\beta)=\mathrm{diag}(D_i(\beta)).
\end{displaymath}
With this notation it is easy to see that
\begin{displaymath}
s(\beta)=Z'D(\beta)\Sigma^{-1}(\beta)[y - \mu(\beta)], \quad F(\beta)=Z'W(\beta)Z.
\end{displaymath}

The Newton-Raphson and Fisher scoring algorithms are given by the following iterations, respectively:
\begin{eqnarray}
\beta^{(t+1)} &=& \beta^{(t)} - [H_l(\beta^{(t)})]^{-1} s(\beta^{(t)}) \label{Newton-Raphsoniteration}\\
\beta^{(t+1)} &=& \beta^{(t)} - [E_\beta{H_l(\beta^{(t)})}]^{-1} s(\beta^{(t)}).
\end{eqnarray}
When the natural link $g=\mu^{-1}$ is used, Newton-Raphson and Fisher scoring coincide. Fisher scoring can be summarized as iteratively reweighted least squares, IRLS: 
\begin{eqnarray*}
\beta^{(t+1)} &=& \beta^{(t)} + \left(Z'W(\beta^{(t)})Z\right)^{-1} s(\beta^{(t)})\\
 &=& \left(Z'W(\beta^{(t)})Z\right)^{-1} \left[ Z'W(\beta^{(t)})Z     \beta^{(t)} + s(\beta^{(t)}) \right] \\
 &=& \left(Z'W(\beta^{(t)})Z\right)^{-1}Z'W(\beta^{(t)}) \left[Z \beta^{(t)} + D^{-1}(\beta^{(t)})(y - \mu(\beta^{(t)})) \right]
\end{eqnarray*}
where $D^{-1}(\beta)=\mathrm{diag}(D_i^{-1}(\beta))$. In the last equation $\left[Z \beta^{(t)} + D^{-1}(\beta^{(t)})(y - \mu(\beta^{(t)})) \right]$ is usually called the working variate. Using the natural link we have $\theta_i=\eta_i=Z_i \beta$ and the computations are dramatically simplified because the relationships (\ref{mean}) and (\ref{covariance}) give:
\begin{eqnarray*}
D_i(\beta)&=&\frac{\partial h(\eta_i)}{\partial \eta_i}=\frac{\partial \mu_i}{\partial \theta_i}\\
&=& \frac{\partial^2 b(\theta_i)}{\partial \theta_i \partial \theta_i'}=\frac{w_i}{\phi}\Sigma_i(\beta).
\end{eqnarray*}
Therefore score functions and Fisher information simplify to
\begin{displaymath}
s_i(\beta)=Z_i'\left(\frac{w_i}{\phi}\right)\left[y_i - \left. \frac{\partial b(\theta)}{\partial \theta} \right\vert_{\theta=Z_i\beta}\right]
\end{displaymath}
and
\begin{displaymath}
F_i(\beta)=Z_i' \left(\frac{w_i}{\phi}\right)  \left. \frac{\partial^2 b(\theta)}{\partial \theta \partial \theta '}\right\vert_{\theta=Z_i\beta} Z_i
\end{displaymath} 
respectively. It is evident that from the exponential family form in (\ref{EFdensity}) we have all the necessary ingredients for the algorithm. The same is true for the proposed Levenberg-Marquardt approach because for the generalized linear model with natural link our arguments in section \ref{sec:LM} apply \textit{ceteris paribus}, replacing $\theta$ with $\beta$.

\section*{Appendix B}

In this appendix we give a formal proof of convergence for the algorithm discussed in the paper with reference to the generalized linear model with natural link introduced in the previous section. Let $M(\beta)$ be the iteration map
$$
M(\beta)=\beta-\left\{ H_l(\beta) + \gamma P(\beta)  \right\}^{-1} s (\beta)
$$
where $\gamma$ is a positive function and $P(\beta)$ is a symmetric negative definite matrix.
\begin{lemma}\label{lemma_differential}
Let us denote with $\beta_{\infty}$ the maximum likelihood estimate for the parameters of interest. Then the differential of the iteration map is given by:
\begin{displaymath}
 dM(\beta_{\infty})=I - \left\{H_l(\beta_{\infty}) + \gamma P(\beta_{\infty}) \right\}^{-1}H_l(\beta_{\infty}).
\end{displaymath}
\end{lemma}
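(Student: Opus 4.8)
The plan is to differentiate the iteration map directly and then exploit the first–order optimality condition at the fixed point. Write, for brevity, $A(\beta)=H_l(\beta)+\gamma P(\beta)$, so that $M(\beta)=\beta-A(\beta)^{-1}s(\beta)$. First I would check that $M$ is well defined and differentiable in a neighbourhood of $\beta_{\infty}$: by the regularity assumptions of Appendix A the Hessian $H_l(\beta)$ is negative definite for $\beta$ in the region of interest, and $\gamma P(\beta)$ is negative definite by hypothesis (with $\gamma>0$), so $A(\beta)$ is negative definite and in particular invertible; its entries depend smoothly on $\beta$ through the derivatives of the log-partition function $b$, and $\gamma,P$ are assumed regular enough. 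Hence $\beta\mapsto A(\beta)^{-1}$ and $\beta\mapsto s(\beta)$ are differentiable near $\beta_{\infty}$, and so is $M$.

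Next I would apply the product rule together with the standard identity $d(A^{-1})=-A^{-1}(dA)\,A^{-1}$, which gives, at a generic point,
$$dM(\beta)=I+A(\beta)^{-1}\bigl(dA(\beta)\bigr)A(\beta)^{-1}s(\beta)-A(\beta)^{-1}\,ds(\beta).$$
I would not need to expand $dA(\beta)=dH_l(\beta)+d(\gamma P)(\beta)$ explicitly. The point is to evaluate at $\beta=\beta_{\infty}$: since $\beta_{\infty}$ is the maximum likelihood estimate and lies in the interior of the natural parameter space, the first-order condition yields $s(\beta_{\infty})=0$. Therefore the middle term, which carries $s(\beta_{\infty})$ as a right factor, vanishes identically, leaving $dM(\beta_{\infty})=I-A(\beta_{\infty})^{-1}\,ds(\beta_{\infty})$. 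Finally, because the score is the gradient of the loglikelihood, its Jacobian is the Hessian, $ds(\beta_{\infty})=H_l(\beta_{\infty})$; substituting $A(\beta_{\infty})=H_l(\beta_{\infty})+\gamma P(\beta_{\infty})$ produces the asserted formula.

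I do not expect a genuine obstacle here — the computation is routine matrix calculus. The two points deserving a line of care are (i) the differentiability of $\beta\mapsto A(\beta)^{-1}$ near $\beta_{\infty}$, which I reduce to invertibility of the negative definite matrix $A(\beta_{\infty})$ and smoothness of $b$ (in fact, since $s(\beta_{\infty})=0$, mere continuity of $A^{-1}$ would already suffice for $A^{-1}s$ to be differentiable at $\beta_{\infty}$), and (ii) the observation that the only term that would otherwise force us to track $d(\gamma P)$ drops out precisely because the score vanishes at the MLE. This second fact is the structural reason the statement is so clean, so I would state and use it explicitly rather than letting it hide inside the algebra.
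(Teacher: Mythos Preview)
Your proof is correct and complete. The approach differs slightly from the paper's: you differentiate $M(\beta)=\beta-A(\beta)^{-1}s(\beta)$ directly via the product rule and the identity $d(A^{-1})=-A^{-1}(dA)A^{-1}$, whereas the paper sets $\zeta(\beta)=-A(\beta)^{-1}s(\beta)$, rewrites this as the implicit relation $-A(\beta)\zeta(\beta)=s(\beta)$, and invokes the implicit function theorem to obtain $d\zeta(\beta_{\infty})$. Both routes rest on exactly the same two ingredients---invertibility of $A(\beta)$ from negative definiteness, and the vanishing $s(\beta_{\infty})=0$ that kills the term carrying $dA$---so they are essentially equivalent in content. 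Your direct computation is arguably the more transparent of the two, and your explicit remark that the $d(\gamma P)$ contribution disappears precisely because of the first-order condition is a point the paper leaves implicit.
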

\begin{proof}
We have $H_l(\beta)=E_{\beta}H_l(\beta)=-F(\beta)$ and the last term is negative definite due to the regularity assumptions. $ \gamma P(\beta)$ is also negative definite. Therefore  $\left\{H_l(\beta) + \gamma P(\beta) \right\}$ is always invertible. Let us denote with $\zeta(\beta)$ the function
\begin{displaymath}
\zeta(\beta)=-\left\{H_l(\beta) + \gamma P(\beta) \right\}^{-1}s (\beta).
\end{displaymath}
We can rearrange the terms as $-\left\{H_l(\beta) + \gamma P(\beta) \right\}\zeta(\beta)= s (\beta)$.  An application of the implicit function theorem in a neighbourhood of $\beta_{\infty}$ gives
\begin{eqnarray*}
d\zeta(\beta)&=&- \left \{ d_{\zeta(\beta)} \left( -\left[ H_l(\beta) + \gamma P(\beta) \right]\zeta(\beta) \right) \right\}^{-1}H_l(\beta)\\
&=&\left \{H_l(\beta) + \gamma P(\beta) \right \}^{-1}H_l(\beta).
\end{eqnarray*}
Therefore $d\zeta(\beta_{\infty}) =\left \{H_l(\beta_{\infty}) + \gamma P(\beta_{\infty}) \right \}^{-1}H_l(\beta_{\infty})$ and $dM(\beta_{\infty})=I - d\zeta(\beta_{\infty})$.
\end{proof}

\begin{theorem}
The algorithm given by equation (\ref{LMiteration}) is locally attracted to the maximum likelihood estimate $\beta_{\infty}$ at a linear rate equal to the spectral radius of 
$$I - \left\{ H_l(\beta_{\infty}) + \gamma P(\beta_{\infty}) \right\}^{-1}  H_l(\beta_{\infty}),$$ 
or at a better rate.
\end{theorem}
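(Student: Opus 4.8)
The plan is to read the iteration \eqref{LMiteration} as a fixed-point scheme $\beta^{(t+1)}=M(\beta^{(t)})$ and to invoke the classical local-convergence theorem for such schemes (the Ostrowski point-of-attraction theorem): if $M$ is continuously differentiable in a neighbourhood of a fixed point $\beta_{\infty}$ and the spectral radius $\rho\bigl(dM(\beta_{\infty})\bigr)$ is strictly less than one, then $\beta_{\infty}$ is a point of attraction and, for every starting value sufficiently close to it, the iterates converge to $\beta_{\infty}$ with asymptotic linear rate at most $\rho\bigl(dM(\beta_{\infty})\bigr)$. First I would check that $\beta_{\infty}$ is indeed a fixed point of $M$: this is immediate, since $s(\beta_{\infty})=0$ forces the correction term to vanish. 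Continuous differentiability of $M$ near $\beta_{\infty}$ follows from the smoothness of the log-partition function $b$ (hence of $s$, of $H_l$, and, via Lemma~\ref{lemma_differential}, of the inverse matrix appearing in $M$) under the regularity conditions of Appendix~A, together with the invertibility of $H_l(\beta)+\gamma P(\beta)$ established there.

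The second step is to identify $dM(\beta_{\infty})$, which Lemma~\ref{lemma_differential} already supplies:
\[
dM(\beta_{\infty}) = I - \bigl\{H_l(\beta_{\infty})+\gamma P(\beta_{\infty})\bigr\}^{-1} H_l(\beta_{\infty}).
\]
Writing $A:=-H_l(\beta_{\infty})$ and $B:=-\gamma P(\beta_{\infty})$ — both symmetric positive definite, the former because $H_l=-F$ under the regularity assumptions, the latter by hypothesis on $P$ and positivity of $\gamma$ — a short rearrangement gives $dM(\beta_{\infty})=(A+B)^{-1}B$. Thus the rate constant in the statement is exactly $\rho\bigl((A+B)^{-1}B\bigr)$.

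The heart of the argument, and the step I expect to require the most care, is to show $\rho\bigl((A+B)^{-1}B\bigr)<1$. The matrix $(A+B)^{-1}B$ need not itself be symmetric, but it is similar to $C:=(A+B)^{-1/2}B(A+B)^{-1/2}$, which is symmetric positive definite; hence the eigenvalues of $(A+B)^{-1}B$ are real, positive, and coincide with those of $C$. Since $A$ is positive definite we have $B\prec A+B$ in the Loewner order, and conjugation by $(A+B)^{-1/2}$ preserves this inequality, giving $C\prec I$. Therefore every eigenvalue of $C$ lies in the open interval $(0,1)$, so $\rho\bigl((A+B)^{-1}B\bigr)<1$. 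Feeding this bound into the point-of-attraction theorem yields local convergence to $\beta_{\infty}$ at the claimed asymptotic linear rate $\rho\bigl(dM(\beta_{\infty})\bigr)$.

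Finally, the clause \emph{``or at a better rate''} records that the spectral radius is only an upper bound for the asymptotic contraction factor: the actual iterates may beat it, for instance when the error components along the slow eigendirections are negligible, or, in the adaptive scheme of Section~\ref{secDampPar}, when $\gamma^{(t)}\to 0$, in which case $\gamma P(\beta_{\infty})$ is driven toward the zero matrix, $dM(\beta_{\infty})$ toward the zero matrix, and the local behaviour approaches the quadratic convergence of Newton--Raphson. I would therefore state and prove the theorem for a fixed damping function $\gamma$, this being the object relevant to the asymptotic analysis, and remark that the adaptive updating rule can only improve the observed rate.
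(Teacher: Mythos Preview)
Your proposal is correct and follows essentially the same route as the paper: verify that $\beta_\infty$ is a fixed point of $M$, invoke the Ostrowski-type local attraction criterion (the paper cites it as Proposition~15.3.1 in \citet{KL_NAS}), use Lemma~\ref{lemma_differential} to identify $dM(\beta_\infty)=\{H_l+\gamma P\}^{-1}\gamma P$, and then show its eigenvalues lie in $(0,1)$. The only cosmetic difference is in this last step: the paper bounds the eigenvalues via the generalized Rayleigh quotient $v'\gamma P v\,/\,v'(H_l+\gamma P)v$, while you symmetrize by conjugating with $(A+B)^{-1/2}$ and use the Loewner order $B\prec A+B$; these are equivalent arguments, and your version has the merit of making explicit why the non-symmetric product $(A+B)^{-1}B$ still has real eigenvalues.
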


\begin{proof}
We remark that $\gamma$ is a positive function. The point of maximum for $l(\beta)$ is a fixed point for $M(\beta)$. According to Proposition 15.3.1 in \citet{KL_NAS} it suffices to show that all eigenvalues of the differential $dM(\beta_{\infty})$ lie on the open interval $(0,1)$. By Lemma \ref{lemma_differential} the following passages hold:
\begin{eqnarray*}
dM(\beta_{\infty}) &=& I - \left\{ H_l(\beta_{\infty}) + \gamma P(\beta_{\infty})  	\right\}^{-1}  H_l(\beta_{\infty})\\
	&=&\left\{ H_l(\beta_{\infty}) + \gamma P(\beta_{\infty})  	\right\}^{-1} \left\{ H_l(\beta_{\infty}) + \gamma P(\beta_{\infty}) - H_l(\beta_{\infty}) 	\right\}\\
	&=& \left\{ H_l(\beta_{\infty}) + \gamma P(\beta_{\infty})  	\right\}^{-1} \left\{ \gamma P(\beta_{\infty}) \right\}.
\end{eqnarray*}
The maximum and minimum eigenvalues of $dM(\beta_{\infty})$ are determined by the maximum and minimum values of the Rayleigh quotient ($v \neq 0$):
\begin{eqnarray*}
R(v)&=&\frac{v'\left[ \gamma P(\beta_{\infty}) \right]v}{v'\left[H_l(\beta_{\infty}) +\gamma P(\beta_{\infty}) \right]v}\\
&=& 1 - \frac{v'H_l(\beta_\infty)v}{v'\left[H_l(\beta_{\infty}) +\gamma P(\beta_{\infty}) \right]v}.
\end{eqnarray*}
The quantities $H_l(\beta_{\infty})$, $\gamma P(\beta_{\infty})$ and $H_l(\beta_{\infty}) +\gamma P(\beta_{\infty})$ are definite negative and therefore $0<R(v)<1$.
\end{proof}

\end{document}